\documentclass{article}

\usepackage{PRIMEarxiv}

\usepackage[utf8]{inputenc} 
\usepackage[T1]{fontenc}    
\usepackage{hyperref}       
\usepackage{url}            
\usepackage{booktabs}       
\usepackage{amsfonts}       
\usepackage{amsmath,amssymb}
\usepackage{mathtools}
\usepackage{nicefrac}       
\usepackage{microtype}      
\usepackage{lipsum}
\usepackage{fancyhdr}       
\usepackage{graphicx}       

\usepackage{times}
\usepackage{subfig}
\usepackage{xcolor}
\usepackage{array}
\usepackage{cite}
\usepackage{multirow}
\usepackage{adjustbox}
\usepackage[scr=dutchcal]{mathalfa}
\usepackage[font=small,labelfont=bf]{caption}
\usepackage{enumitem}
\usepackage{tikz}
\usetikzlibrary{arrows,matrix} 

\DeclareMathOperator{\diag}{diag}

\newtheorem{lemma}{Lemma}

\newtheorem{assumption}{Assumption}

\newtheorem{remark}{Remark}

\newtheorem{proof}{Proof}

\usepackage{tikz-network}
\tikzset{
    block/.style = {draw, fill=white, rectangle, minimum height=.75cm, minimum width=1.25cm},
    Circle/.style = {draw, thick, fill=white, circle, minimum height=.75cm, minimum width=1.25cm},
    tmp/.style  = {coordinate}, 
    sum/.style= {draw=white, fill=white, circle, minimum height=.75cm, minimum width=1.25cm},
    input/.style = {coordinate},
    output/.style= {coordinate},
    pinstyle/.style = {pin edge={to-,thin,black}
    }
}

\title{
Distributed Nonlinear Control of Networked Two-Wheeled Robots under Adversarial Interactions
}

\author{
  Moh Kamalul Wafi\\
  Department of Engineering Physics \\
  Institut Teknologi Sepuluh Nopember (ITS), Surabaya, Indonesia\\
  \texttt{\{kamalul.wafi\}@its.ac.id} \\
   \And
  Ahmad Ataka \\
  Department of Electrical and Information Engineering \\
  Universitas Gadjah Mada (UGM), Yogyakarta, Indonesia\\
  \AND
  Yul Yunazwin Nazaruddin \\
  Department of Engineering Physics \\
  Institut Teknologi Bandung (ITB), Bandung, Indonesia \\
  \And
  Bayu Jayawardhana \\
  Engineering and Technology Institute Groningen, Faculty of Science and Engineering \\
  University of Groningen, the Netherlands \\
}

\begin{document}
\maketitle

\begin{abstract}
This paper studies distributed trajectory tracking for networks of nonholonomic mobile robots under adversarial information exchange.
An exact global input--output feedback linearization scheme is developed to regulate planar position outputs, yielding linear error dynamics without prescribing internal state trajectories.
To mitigate corrupted neighbor information, a resilient desired-signal construction is proposed that combines local redundancy with trusted in-neighbor signals, without requiring adversary detection or isolation.
When sufficient redundancy is available, the method suppresses adversarial influence and recovers nominal tracking performance.
If redundancy conditions are violated, adversarial effects enter as bounded disturbances and the tracking error remains ultimately bounded.
Simulation results on star, cyclic, and path topologies validate the analysis and demonstrate the superior resilience of cyclic networks due to distributed information propagation.
\end{abstract}
\allowdisplaybreaks

\keywords{Distributed Control \and Nonlinear Systems \and Networked Robotics \and Adversarial Interactions}

\section{Introduction}

Distributed coordination and trajectory tracking of networked robotic systems have received significant attention due to their applications, including autonomous transportation, surveillance, and cooperative manipulation \cite{ref1,ref2,Wafi-QuadrupleTank,ref17}.
In such systems, agents rely on local information exchange over a communication graph to collectively follow a desired trajectory or formation, often specified by a leader or navigator. While distributed control offers scalability and robustness to single-point failures, their performance can be severely degraded by unreliable, faulty, or malicious information exchanged over the network \cite{ref4,ref5}.

Most existing distributed control strategies assume benign communication environments and focus on linear agent models or full-state feedback \cite{ref6,Wafi-ThreeTank}.
However, many robotic platforms, such as wheeled mobile robots, are governed by nonlinear and nonholonomic dynamics, rendering linear consensus-based methods inadequate. Moreover, adversarial interactions arising from sensor faults, spoofed transmissions, or malicious agents can corrupt shared information and propagate through the network, leading to instability or large tracking errors \cite{ref8,ref9,ref10}.
These issues are particularly pronounced for two-wheeled mobile robots, where coupling between translational and rotational motion amplifies the effect of corrupted signals.
The challenge is further exacerbated when higher-order derivatives of network signals are required, as is typical in feedback linearization frameworks for nonlinear systems.

Recent works on resilient consensus and secure multi-agent coordination have addressed adversarial behaviors using graph-theoretic redundancy, trimming, or filtering techniques \cite{ref11,ref12,ref16}. These approaches exploit local redundancy to suppress extreme or inconsistent neighbor information without explicit adversary identification.
Nevertheless, most of the existing results are developed for single- or double-integrator models and do not directly extend to nonlinear systems requiring exact input–output linearization.
Furthermore, several methods rely on centralized filtering, adversary detection, or static assumptions on trusted agents \cite{ref13,ref14}, which conflict with the goal of fully distributed implementation and increase vulnerability to coordinated attacks.
Addressing resilience for nonlinear, nonholonomic multi-agent systems under adversarial information exchange therefore remains an open and practically relevant problem \cite{ref15}. Other secure multi-agent coordination methods use homomorphic encryption techniques combined with distributed control approaches to ensure secure computation of distributed control through third-party cloud services \cite{matteo2023}.  

In this paper, we propose a distributed nonlinear control framework for trajectory tracking of nonholonomic robotic networks under adversarial information exchange.
The approach regulates only the planar position outputs via global input–output feedback linearization, resulting in exact linear error dynamics while avoiding explicit control of internal states.
To handle adversarial interactions, we introduce a resilient construction of the local desired signal that combines trusted navigator information with time-varying trimming of neighbor signals.
The trimming mechanism operates in a fully distributed manner and adapts over time, allowing different neighbors to be discarded at different instants.

The proposed method does not require adversary detection or isolation and remains compatible with arbitrary directed topologies provided the navigator is reachable.
When sufficient local redundancy is available, adversarial effects are suppressed at the reference level; otherwise, the closed-loop system admits provable worst-case robustness bounds.
Extensive simulations on star, cyclic, and path networks demonstrate that, while all topologies achieve nominal tracking in the absence of adversaries, cyclic graphs exhibit superior resilience under adversarial conditions due to distributed information propagation.

The main contributions of this paper are as follows:
\begin{enumerate}[leftmargin=*]
    \item A distributed input–output feedback linearization framework for nonholonomic robotic networks that regulates only output variables and yields linear error dynamics.
    \item A resilient desired-signal construction that mitigates adversarial interactions using trusted in-neighbor access and time-varying trimming, without detection or isolation.
    \item A robustness analysis showing bounded tracking errors when graph redundancy is insufficient.
    \item Numerical simulations illustrating the impact of network topology on resilience and highlighting the advantages of cyclic information flow under adversarial interactions.
\end{enumerate}

Notation:
Let $\mathbb R$, $\mathbb R_{\ge0}$, and $\mathbb C$ denote the sets of real, nonnegative real, and complex numbers, respectively. The identity and zero matrices of dimension $n$ are denoted by $I_n$ and $0_n$. The vector of ones is denoted by $\mathbf 1_n$. For a matrix $A$, $A\succ0$ ($A\succeq0$) indicates that $A$ is positive definite (positive semidefinite). The Kronecker product is denoted by $\otimes$, and $\diag\{\cdot\}$ denotes a block-diagonal matrix. The Euclidean norm is written as $\|\cdot\|$.

\section{Communication Network}\label{sec:ComNetwork}

The cooperative motion architecture is organized around a collection of 
$m{+}1$ robotic vehicles (two-wheeled robots) indexed by $\mathcal{V}=\{0,1,\dots,m\}$. Vehicle $0$ acts as a \emph{navigator unit} that injects a desired motion signal, while vehicles $1,\dots,m$ operate as \emph{tracking units} that sense locally and exchange information with nearby vehicles to coordinate their motion.

Information exchange among the vehicles is described by a weighted digraph $\mathcal{G} = (\mathcal{V},\mathcal{E},\mathcal{W})$, where an edge $(i,j)\in\mathcal{E}$ with weight $w_{ij}> 0$ signifies that vehicle $i$ receives the information from vehicle $j$.
The in-neighbor set of vehicle $i$ is $\mathcal{N}_i=\{j\in\mathcal{V} \mid (i,j)\in\mathcal{E}\}$. To isolate vehicle--vehicle communication from navigator--vehicle influence, we introduce two induced subgraphs of $\mathcal{G}$ (see Fig.~\ref{Fig:network}):
\begin{itemize}
    \item $\mathcal{G}_m=(\mathcal{V}_m,\mathcal{E}_m,\mathcal{W}_m)$ with $\mathcal{V}_m=\{1,\dots,m\}$, capturing inter-vehicle interactions;
    \item $\mathcal{G}_0=(\mathcal{V}_0,\mathcal{E}_0,\mathcal{W}_0)$ with $\mathcal{V}_0=\{0\}\cup\{i:(i,0)\in\mathcal{E}\}$, capturing navigator-to-vehicle interaction.
\end{itemize}
\begin{figure}[h!]
    \centering
    \scalebox{1}{{\begin{tikzpicture}
            \centering
            \Text[x=1,y=1.3,fontsize=\small]{$\mathcal{G}$};
            \Vertex[x=-.375,y=-.375,label=$0$,color=blue,opacity=0.2,size=.5]{L}
            \Vertex[x=-1,y=1,label=$1$,color=red,opacity=0.2,size=.5]{1}
            \Vertex[x=-1,y=-1,label=$2$,color=red,opacity=0.2,size=.5]{2}
            \Vertex[x=1,y=-1,label=$3$,color=red,opacity=0.2,size=.5]{3}
            \Vertex[x=1,y=0,label=$4$,color=red,opacity=0.2,size=.5]{4}
            \Vertex[x=0,y=1,label=$5$,color=red,opacity=0.2,size=.5]{5}
            \Edge[Direct,color=blue,label=$w_{10}$](L)(1)
            \Edge[Direct,color=blue,label=$w_{40}$](L)(4)
            \Edge[Direct,color=blue,label=$w_{50}$](L)(5)
            \Edge[Direct,label=$w_{21}$](1)(2)
            \Edge[Direct,label=$w_{32}$](2)(3)
            \Edge[Direct,bend=30,label=$w_{45}$](5)(4)
            \Edge[Direct,bend=30,label=$w_{54}$](4)(5)
        \end{tikzpicture}}}
    \qquad
    \scalebox{1}{{\begin{tikzpicture}
            \centering
            \Text[x=1,y=1.3,fontsize=\small]{$\mathcal{G}_m$};
            \Vertex[x=-1,y=1,label=$1$,color=red,opacity=0.2,size=.5]{1}
            \Vertex[x=-1,y=-1,label=$2$,color=red,opacity=0.2,size=.5]{2}
            \Vertex[x=1,y=-1,label=$3$,color=red,opacity=0.2,size=.5]{3}
            \Vertex[x=1,y=0,label=$4$,color=red,opacity=0.2,size=.5]{4}
            \Vertex[x=0,y=1,label=$5$,color=red,opacity=0.2,size=.5]{5}
            \Edge[Direct,label=$w_{21}$](1)(2)
            \Edge[Direct,label=$w_{32}$](2)(3)
            \Edge[Direct,bend=30,label=$w_{45}$](5)(4)
            \Edge[Direct,bend=30,label=$w_{54}$](4)(5)
        \end{tikzpicture}}}
    \qquad
    \scalebox{1}{{\begin{tikzpicture}
            \centering
            \Text[x=1,y=1.3,fontsize=\small]{$\mathcal{G}_0$};
            \Vertex[x=-.375,y=-.375,label=$0$,color=blue,opacity=0.2,size=.5]{L}
            \Vertex[x=-1,y=-1,style={fill=white,opacity=0},size=.5]{2}
            \Vertex[x=-1,y=1,label=$1$,color=red,opacity=0.2,size=.5]{1}
            \Vertex[x=1,y=0,label=$4$,color=red,opacity=0.2,size=.5]{4}
            \Vertex[x=0,y=1,label=$5$,color=red,opacity=0.2,size=.5]{5}
            \Edge[Direct,color=blue,label=$w_{10}$](L)(1)
            \Edge[Direct,color=blue,label=$w_{40}$](L)(4)
            \Edge[Direct,color=blue,label=$w_{50}$](L)(5)
        \end{tikzpicture}}}
    \caption{Example of a graph $\mathcal{G}$ with $m=5$, vehicle-to-vehicle subgraph $\mathcal{G}_m$, and navigator-to-vehicle subgraph $\mathcal{G}_0$, showing the decoupling and assignment of $w_{ij}$.}
    \label{Fig:network}
\end{figure}
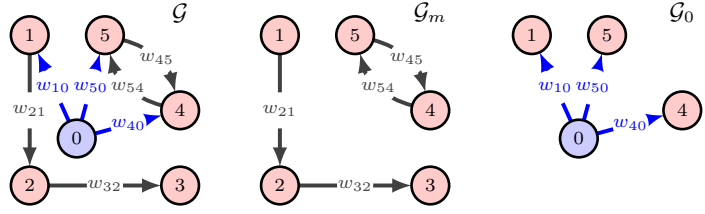

Regarding the subgraph $\mathcal{G}_m$, the in-degree and adjacency matrices are defined as $\mathbb{D}_m=\diag\{d_1,\dots,d_m\}$ where $d_i=\sum_{j:(i,j)\in\mathcal{E}_m}w_{ij}$, and $[\mathbb{A}_m]_{ij}=w_{ij}$. The corresponding Laplacian is $\mathbb{L}_m \coloneqq \mathbb{D}_m - \mathbb{A}_m$. Within $\mathcal{G}_0$, the navigator influences the tracking vehicles through the diagonal matrix $\mathbb{A}_0=\diag\{w_{10},\dots,w_{m0}\}$.
Aggregating both contributions yields the augmented Laplacian $\mathbb{L} = \mathbb{L}_m + \mathbb{A}_0$. 

Introducing the diagonal matrix $\mathbb{W}=\diag\{w_1,\dots,w_m\}$ where $w_i=d_i+w_{i0}$ and $w_i=1$ for all $i$, gives
\begin{equation*}
    \mathbb{L} \coloneqq \mathbb{L}_m + \mathbb{A}_0 = \mathbb{W} - \mathbb{A}_m.
\end{equation*}
For cooperative motion control, we adopt a balanced weighting design such that
\begin{equation}\label{eq:ComNet:balance}
    (\mathbb{L}-\mathbb{A}_0)\mathbf{1}_m=\mathbf{0}_m
    \quad\Leftrightarrow\quad
    (\mathbb{A}_m+\mathbb{A}_0)\mathbf{1}_m=\mathbf{1}_m.
\end{equation}
If $\mathbb{W}\neq I_m$, a normalized set of weights $\tilde w_{ij}=w_{ij}/w_i$, $\tilde w_{i0}=w_{i0}/w_i$ reproduces the balance property and yields the normalized Laplacian
$\tilde{\mathbb{L}}=\tilde{\mathbb{L}}_m+\tilde{\mathbb{A}}_0$, satisfying \eqref{eq:ComNet:balance}.

\begin{remark}[Navigator reachability]\label{rem:reachability}
If at least one tracking vehicle satisfies $w_{i0}>0$ and every vehicle is reachable from vehicle~$0$ along a directed path, then $\mathbb{L}=\mathbb{L}_m+\mathbb{A}_0$ 
is positive stable, i.e., all eigenvalues of $\mathbb L$ have positive real parts.
\end{remark}

\section{Problem Formulation and Preliminaries}

\subsection{Nominal Problem Formulation}

We adopt a general systems-theoretic formulation of nonlinear dynamics defined by the quadruple $\mathbb{S} \coloneqq (X, P, U, \phi)$,
where $X$ is a topological state space, $P$ is a strongly ordered group indexing time, $U$ is a linear space of control inputs, and
$\phi : P \times X \times U \to X$ is a transition map.
Throughout this work, we consider continuous-time systems with $P=\mathbb{R}$ and assume that $\phi$ is continuous.
Accordingly, state trajectories are mappings $\chi : P \to X$, where $\chi(t)$ denotes the system state at time $t \in P$.

We now specialize the formulation to a networked setting consisting of $m{+}1$ vehicles mentioned in Section~\ref{sec:ComNetwork}, \textit{without} adversarial interactions. Each vehicle $i\in\{0,1,\dots,m\}$ is governed by nonlinear dynamics
\begin{equation}\label{eq:Problem:dynamics}
    \dot{\chi}_i = \pi(\chi_i,\nu_i), \qquad y_i = h(\chi_i)
\end{equation}
where $\chi_i\in\mathbb{R}^n$ denotes the state vector, $\nu_i\in\mathbb{R}^p$ is the control input and $y_i\in\mathbb{R}^q$ is the output. The vector field $\pi:\mathbb{R}^n\times\mathbb{R}^p\to\mathbb{R}^n$ is assumed to be sufficiently smooth and identical across all vehicles and $h:\mathbb{R}^n\to\mathbb{R}^q$ is a smooth output map.
Vehicle $i$ forms a local desired signal $z_i\in\mathbb{R}^q$ as a weighted sum of its neighbors' outputs (including the navigator output $y_0\in\mathbb{R}^q$ if $0\in\mathcal{N}_i$):
\begin{equation}\label{eq:Problem:loc_ref}
    z_i \coloneqq [z_{i,1},\dots,z_{i,q}]^\top = \sum\nolimits_{j\in\mathcal{N}_i} w_{ij} y_j,
\end{equation}
with $w_{ij}> 0$. The local tracking error is
\begin{equation}\label{eq:Problem:loc_error}
    \epsilon_i \coloneqq y_i - z_i \in \mathbb{R}^q.
\end{equation}
To accommodate feedback laws that depend on higher-order output-error coordinates,
we define an augmented error state $\eta_i\in\mathbb{R}^r$ constructed from $\epsilon_i$
and locally available signals defined later.
The nominal (non-adversarial) control objective is to design control laws
$\nu_i = K_i(\eta_i)$, $K_i:\mathbb{R}^r\to\mathbb{R}^p$,
such that all tracking vehicles asymptotically follow the navigator,
\begin{equation}
    \lim_{t\to\infty} \|y_i(t)-y_0(t)\| = 0,
    \qquad \forall\, i\in\{1,\dots,m\}.
\end{equation}

\subsection{Non-nominal Problem Formulation}\label{subsec:Adversarial}

The communication network introduced in Section~\ref{sec:ComNetwork} may contain adversarial interactions that inject corrupted information into the distributed coordination process. Such interactions may arise from compromised vehicles, malicious transmissions, or sensor spoofing.

Let $\mathcal E_a \subseteq \mathcal E_m$ denote the set of adversarial interactions and define
$\mathcal N_i^a\coloneqq \{j\in\mathcal V \mid (i,j)\in\mathcal E_a\}$
as the set of adversarial neighbors of vehicle~$i$. Accordingly, the corrupted desired signal received by vehicle~$i$ is
\begin{equation}\label{eq:Problem:Adv_signal}
    z_i^a
    \coloneqq
    \sum\nolimits_{j\in\mathcal N_i^-} w_{ij} y_j
    +
    \sum\nolimits_{j\in\mathcal N_i^a} w_{ij} y_j^a,
\end{equation}
where $y_j^a(t)\coloneqq y_j(t)+a_{ij}(t)$ and
$\mathcal N_i^- \coloneqq \mathcal N_i\setminus\mathcal N_i^a$ denotes the set of non-adversarial neighbors. We assume that the adversarial perturbations are sufficiently smooth and satisfy
$\|a_{ij}^{(k)}(t)\|\le \bar a$ for $k=0,1,2,3$ and all $(i,j)\in\mathcal E_a$.

\begin{assumption}
For each vehicle $i$, the set of non-adversarial neighbors $\mathcal N_i^-$ is nonempty and vehicle $i$ is reachable from the navigator vehicle~$0$ through non-adversarial communication links.
\end{assumption}

\begin{remark}
Equation~\eqref{eq:Problem:Adv_signal} does not require adversary detection or isolation, and instead models adversarial interactions as bounded perturbations affecting distributed coordination.
\end{remark}

In the presence of adversarial interactions, the objective is to design distributed state-feedback control laws such that the tracking error remains bounded despite corrupted information exchange. In particular, exact asymptotic tracking,
\begin{equation*}
    \lim_{t\to\infty} \|y_i(t)-y_0(t)\| = 0,
\end{equation*}
is achieved when adversarial effects are sufficiently weak and the communication network provides adequate redundancy. Otherwise, the objective is to ensure ultimate boundedness,
\begin{equation}
    \limsup_{t\to\infty} \|y_i(t)-y_0(t)\|
    \le \delta(\bar a),
\end{equation}
where $\delta(\bar a)$ depends on the adversarial magnitude.

\subsection{Motivation for Distributed Nonlinear Control}

Let $\chi_{i}^\ast\in\mathbb{R}^n$ denote a desired state associated with vehicle $i$, and define the state tracking error
\begin{equation}\label{eq:Problem:state_error}
    e_i \coloneqq \chi_i - \chi_{i}^\ast = \chi_i - \sum\nolimits_{j\in\mathcal{N}_i} w_{ij} \chi_j.
\end{equation}
where $\chi_0$ denotes the navigator state.
Differentiating \eqref{eq:Problem:state_error} along the trajectories of the nonlinear dynamics
$\dot{\chi}_i = \pi(\chi_i,\nu_i)$ yields
\begin{equation}\label{eq:Problem:error_dyn}
    \dot e_i
    = \pi(\chi_i,\nu_i) - \dot \chi_{i}^\ast
    \eqqcolon \mathcal{F}_i(\chi_i,\chi_{i}^\ast) + \mathcal{H}_i(\chi_i)\nu_i,
\end{equation}
which is nonlinear and state dependent. A common approach is to approximate \eqref{eq:Problem:error_dyn} locally around the trajectory
$\chi_i^\ast$ via first-order linearization, leading to $\dot e_i = A_i e_i + B_i \nu_i$, where $A_i$ and $B_i$ are the corresponding Jacobians. Controllers designed for locally linearized models guarantee stability only in a neighborhood of $\chi_i^\ast$, and performance may degrade for large initial errors or under strong network-induced coupling. 

In contrast, this work regulates only the output variables $y_i=h(\chi_i)$ rather than the full state $\chi_i$, defining tracking errors at the output level and constructing higher-order error coordinates through output differentiation. This approach enables global input–output feedback linearization without prescribing desired trajectories for the internal states, which are instead shaped implicitly by the closed-loop dynamics.

\section{Distributed Nonlinear Control of Nonholonomic Networks}
\begin{figure}[t!]
    \centering
    \includegraphics[width=.5\linewidth]{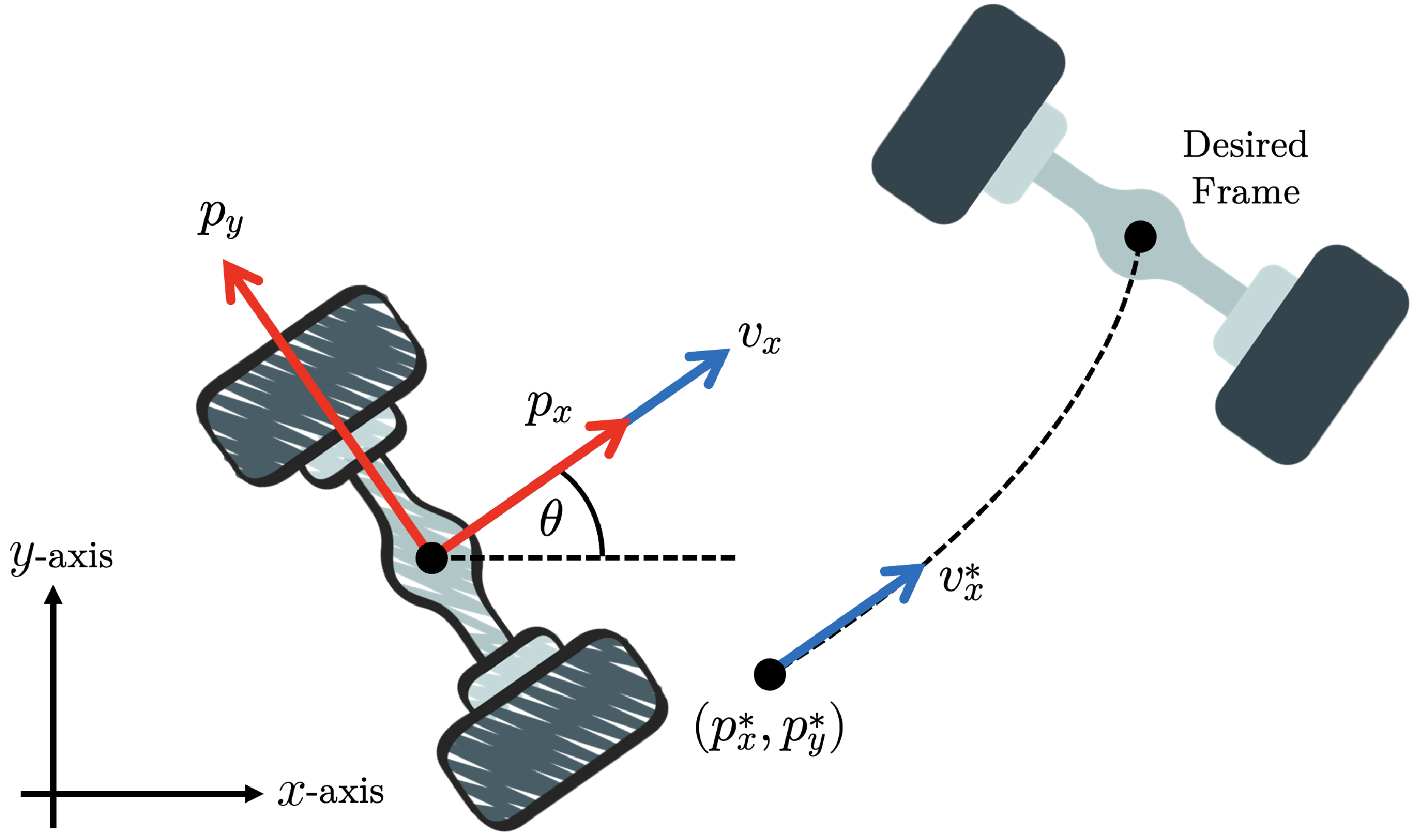}
    \caption{Nonlinear dynamics of a nonholonomic two-wheeled robot}
    \label{fig:placeholder}
\end{figure}
\subsection{Nonholonomic Network Model}

We consider a network of $m$ identical nonholonomic two-wheeled robotic vehicles governed in \eqref{eq:Problem:dynamics}, each described as
\begin{equation}\label{eq:Robots:nonlinear_dynamics}
    \left\{\begin{array}{lll}
         \dot p_{x,i} = v_{x,i}\cos\theta_i, & \dot \theta_i\phantom{_{\,x}} = \omega_i, & \dot \omega_i = T_i, \\
        \dot p_{y,i} = v_{x,i}\sin\theta_i, & \dot v_{x,i} = F_i, &
    \end{array}\right.
\end{equation}
where $\chi_i := [p_{x,i}, p_{y,i}, \theta_i, v_{x,i}, \omega_i]^\top\in\mathbb{R}^5$ is the state and $y_i = [p_{x,i}, p_{y,i}]^\top \in \mathbb{R}^2$ is the output.
Here, $p_{x,i}$ and $p_{y,i}$ denote the planar position of the vehicle center of mass, $\theta_i$ is the body orientation, $v_{x,i}$ is the longitudinal velocity along the body-fixed axis, and $\omega_i$ is the angular velocity. The physical control input is defined as $\nu_i \coloneqq [F_i, T_i]^\top \in \mathbb{R}^2$,
where $F_i$ and $T_i$ denote the longitudinal force and applied torque, respectively.

To enable global feedback linearization, we introduce a dynamic extension of the force input $\dot F_i = u_{i,1}$, treated as another system state, and define $T_i = u_{i,2}$. This results in $u_i \coloneqq [u_{i,1}, u_{i,2}]^\top$ where $u_{i,1},u_{i,2}\in\mathbb{R}$ are virtual control inputs. 
Therefore, the network dynamics admit the compact form
\begin{equation}\label{eq:Robots:dynamics}
    \dot{\bar x} = \mathbf{f}(\bar x,\bar u), \qquad \bar y = \mathbf{h}(\bar x),
\end{equation}
where $\bar x \coloneqq [x_1^\top,\dots,x_m^\top]^\top\in \mathbb{R}^{6m}$ represents the stacked state vector of the augmented state $x_i = [\chi_i^\top, F_i]^\top\in\mathbb{R}^{6}$ and $\bar u \coloneqq [u_1^\top,\dots,u_m^\top]^\top \in \mathbb{R}^{2m}$ denotes the stacked control input. 
The nonlinear vector field $\mathbf{f}:\mathbb{R}^{6m}\times\mathbb{R}^{2m}\to\mathbb{R}^{6m}$ is defined as $\mathbf{f}(\bar x,\bar u) \coloneqq [f(x_1,u_1)^\top, \dots, f(x_m,u_m)^\top]^\top$ and is assumed to be locally Lipschitz in $(\bar x,\bar u)$.

The output map is defined as $ \mathbf{h}(\bar x) = (I_m \otimes C)\bar x$
where $C\in\mathbb{R}^{2\times 6}$ selects the planar positions and the network output $\bar y = [y_1^\top,\dots,y_m^\top]^\top\in\mathbb{R}^{2m}$ collects the planar positions of all vehicles, with $y_i\coloneqq[p_{x,i},p_{y,i}]^\top$.
Finally, given the desired trajectory $y_0\coloneqq[p_{x,0},p_{y,0}]^\top$, then $\bar y_0 \coloneqq \mathbf{1}_m \otimes y_0 \in \mathbb{R}^{2m}$.

By stacking the local desired signals $z_i$ defined in \eqref{eq:Problem:loc_ref}, the network-level desired signal can be written as
\begin{equation}\label{eq:Robots:ref}
    \bar z \coloneqq [z_1^\top,\dots,z_m^\top]^\top =
    (\mathbb{A}_m\otimes I_2)\bar y
    + (\mathbb{A}_0\otimes I_2)\bar y_0.
\end{equation}
Accordingly, stacking the local tracking errors $\epsilon_i = y_i - z_i$ in \eqref{eq:Problem:loc_error} yields the network-level tracking error expression
\begin{equation}\label{eq:Robots:error}
    \bar\epsilon \coloneqq [\epsilon_1^\top,\dots,\epsilon_m^\top]^\top =
    (\mathbb{L}\otimes I_2)\bar y
    - (\mathbb{A}_0\otimes I_2)\bar y_0,
\end{equation}
or equivalently $\bar\epsilon = \bar y - \bar z$. This tracking-error formulation allows the closed-loop dynamics to be expressed explicitly in terms of output errors and their derivatives, which motivates the use of a global feedback linearization approach.

\subsection{Global Feedback Linearization}

We define a state transformation by augmenting the tracking error coordinates and their time derivatives induced by \eqref{eq:Robots:error}. For each vehicle $i$, define
\begin{equation*}
    \eta_i \coloneqq [\epsilon_i^\top,\dot\epsilon_i^\top,\ddot\epsilon_i^\top]^\top
    =
    [\epsilon_{x,i},
    \epsilon_{y,i},
    \dot\epsilon_{x,i},
    \dot\epsilon_{y,i},
    \ddot\epsilon_{x,i},
    \ddot\epsilon_{y,i}]^\top
    \in\mathbb{R}^{6},
\end{equation*}
and let
$\bar \eta = [\eta_1^\top,\dots,\eta_m^\top]^\top\in\mathbb{R}^{6m}$
denote the stacked tracking-error state.
Using \eqref{eq:Robots:nonlinear_dynamics}, with
$\epsilon_{x,i}=p_{x,i}-z_{i,1}$ and
$\epsilon_{y,i}=p_{y,i}-z_{i,2}$,
the first four rows of $\dot\eta_i$ follow directly from the definitions, while the highest-order derivatives satisfy
\begin{align*}
    \dddot\epsilon_{x,i}
    &=
    -\bigl(
        2F_i\omega_i\sin\theta_i
        +
        v_{x,i}\omega_i^2\cos\theta_i
    \bigr)
    -\dddot z_{i,1} +u_{i,1}\cos\theta_i-v_{x,i}u_{i,2}\sin\theta_i,
    \\
    \dddot\epsilon_{y,i}
    &=
    \phantom{-}\bigl(
        2F_i\omega_i\cos\theta_i
        -
        v_{x,i}\omega_i^2\sin\theta_i
    \bigr)
    -\dddot z_{i,2} +u_{i,1}\sin\theta_i+v_{x,i}u_{i,2}\cos\theta_i.
\end{align*}
Observe that the planar outputs
$y_i=[p_{x,i},p_{y,i}]^\top$
have relative degree three with respect to the virtual inputs $u_i$.
Indeed, the inputs $u_{i,1}$ and $u_{i,2}$ appear explicitly only after differentiating the output errors three times. Consequently, the transformed coordinates $\eta_i$ generate a Brunovsk\'y-type normal form consisting of two chains of integrators together with nonlinear coupling terms arising from the nonholonomic kinematics.

Hence, the stacked error dynamics can be written as
\begin{equation}\label{eq:Robots:error_dyn}
    \dot{\bar \eta}
    =
    \mathbf A\bar \eta
    +
    \mathbf B\bigl[
        \bar\lambda
        +
        \Psi\bar u
    \bigr],
\end{equation}
where
$\mathbf A\coloneqq I_m\otimes\mathcal A$
and
$\mathbf B\coloneqq I_m\otimes\mathcal B$,
with
\begin{equation}\label{eq:Robots:AB_matrix}
    \mathcal A =
    \begin{bmatrix}
        0 & 0 & 1 & 0 & 0 & 0 \\
        0 & 0 & 0 & 1 & 0 & 0 \\
        0 & 0 & 0 & 0 & 1 & 0 \\
        0 & 0 & 0 & 0 & 0 & 1 \\
        0 & 0 & 0 & 0 & 0 & 0 \\
        0 & 0 & 0 & 0 & 0 & 0
    \end{bmatrix},
    \quad
    \mathcal B =
    \begin{bmatrix}
        0 & 0 \\
        0 & 0 \\
        0 & 0 \\
        0 & 0 \\
        1 & 0 \\
        0 & 1
    \end{bmatrix}.
\end{equation}
Equation~\eqref{eq:Robots:error_dyn} represents an exact input--output linearization of the nonlinear network dynamics in the transformed coordinates.
The matrices $\mathcal A$ and $\mathcal B$ represent two decoupled chains of integrators associated with the planar tracking errors.
The nonlinear term $\bar\lambda$ is defined by
\begin{equation}\label{eq:Robots:lambda}
    \bar\lambda
    =
    \bar\sigma
    -
    \dddot{\bar z},
\end{equation}
where
$\bar\lambda=[\lambda_1^\top,\dots,\lambda_m^\top]^\top$ and 
$\bar\sigma=[\sigma_1^\top,\dots,\sigma_m^\top]^\top$.
For each vehicle $i$, 
$\lambda_i = \sigma_i - \dddot z_i$ and $\dddot z_i = [\dddot z_{i,1},\dddot z_{i,2}]^\top,$
with
\begin{equation*}
    \sigma_i
    =
    \begin{bmatrix}
        -2F_i\omega_i\sin\theta_i
        -
        v_{x,i}\omega_i^2\cos\theta_i
        \\
        \phantom{-}2F_i\omega_i\cos\theta_i
        -
        v_{x,i}\omega_i^2\sin\theta_i
    \end{bmatrix}.
\end{equation*}
The term $\dddot{\bar z}$ is assumed to be well-defined along admissible trajectories.
Finally, $\Psi = \diag\{\psi_1,\dots,\psi_m\},$ where
\begin{equation*}
    \psi_i
    =
    \begin{bmatrix}
        \cos\theta_i & -v_{x,i}\sin\theta_i \\
        \sin\theta_i & \phantom{-}v_{x,i}\cos\theta_i
    \end{bmatrix}.
\end{equation*}
Thus, $\Psi$ characterizes the coupling between the virtual inputs $\bar u$ and the highest-order tracking-error derivatives. Since $\det(\psi_i)=v_{x,i}$, the matrix $\Psi$ is invertible along trajectories satisfying
$v_{x,i}\neq0$.

\subsection{Distributed Nonlinear Control}

Using the feedback-linearized error dynamics \eqref{eq:Robots:error_dyn}, define the virtual control input
\begin{equation}\label{eq:Robots:control}
    \bar u
    =
    \Psi^{-1}
    \bigl[
        -\bar\lambda
        +
        \bar u^+
    \bigr],
\end{equation}
where $\bar u^+\in\mathbb R^{2m}$ is an auxiliary input.
Substituting \eqref{eq:Robots:control} into \eqref{eq:Robots:error_dyn} cancels the nonlinear terms and yields the exact linearized error dynamics
\begin{equation}\label{eq:Robots:linear_cl}
    \dot{\bar\eta}
    =
    \mathbf A\bar\eta
    +
    \mathbf B\bar u^+.
\end{equation}
Thus, the nonlinear tracking problem is transformed into the stabilization of the linear system \eqref{eq:Robots:linear_cl}.
We now choose $\bar u^+$ as the linear state-feedback law $\bar u^+ = \mathbf K\bar\eta,$
where $\mathbf K\in\mathbb R^{2m\times6m}$ is selected such that
$\mathbf A+\mathbf B\mathbf K$ is Hurwitz.
A convenient distributed choice is $\mathbf K = I_m\otimes\mathcal K,$
where $\mathcal K\in\mathbb R^{2\times6}$ satisfies
$\mathcal A+\mathcal B\mathcal K$
Hurwitz.
Such a matrix $\mathcal K$ always exists because the pair
$(\mathcal A,\mathcal B)$
is controllable. Indeed, $\mathbf A+\mathbf B\mathbf K = I_m\otimes(\mathcal A+\mathcal B\mathcal K),$
whose eigenvalues are precisely those of $\mathcal A+\mathcal B\mathcal K$ repeated $m$ times.

Although \eqref{eq:Robots:control} achieves exact cancellation, the term
$\dddot{\bar z}$
depends implicitly on neighboring outputs through \eqref{eq:Robots:ref}, and therefore depends on the control input itself.
To derive an implementable expression, observe from
\eqref{eq:Robots:error_dyn}
that
\begin{equation*}
    \dddot{\bar\epsilon}
    =
    \bar\sigma
    -
    \dddot{\bar z}
    +
    \Psi\bar u.
\end{equation*}
Since
$\bar\epsilon=\bar y-\bar z$, we also have $\dddot{\bar\epsilon} = \dddot{\bar y} - \dddot{\bar z},$
which gives $\dddot{\bar y} = \bar\sigma + \Psi\bar u.$
Substituting this relation into the third derivative of
\eqref{eq:Robots:ref}
yields
\begin{equation*}
    \dddot{\bar z}
    =
    (\mathbb A_m\otimes I_2)
    (\bar\sigma+\Psi\bar u)
    +
    (\mathbb A_0\otimes I_2)\dddot{\bar y}_0.
\end{equation*}
Hence, the control law \eqref{eq:Robots:control} becomes
\begin{equation*}
    \bar u
    =
    \Psi^{-1}
    \bigl[
        -\bar\sigma
        +
        (\mathbb A_m\otimes I_2)
        (\bar\sigma+\Psi\bar u)
        +
        (\mathbb A_0\otimes I_2)\dddot{\bar y}_0
        +
        \mathbf K\bar\eta
    \bigr].
\end{equation*}
Collecting the terms involving $\bar u$ gives
\begin{equation}\label{eq:Robots:control_actual}
    \mathbf M\bar u
    =
    \Psi^{-1}
    \bigl[
        -\bar\sigma
        +
        (\mathbb A_m\otimes I_2)\bar\sigma
        +
        (\mathbb A_0\otimes I_2)\dddot{\bar y}_0
        +
        \mathbf K\bar\eta
    \bigr],
\end{equation}
where
\begin{equation*}
    \mathbf M
    \coloneqq
    \Psi^{-1}
    [I_{2m}-(\mathbb A_m\otimes I_2)]
    \Psi.
\end{equation*}

We now show that $\mathbf M$ is invertible.
Since the weights are normalized, $\mathbb W=I_m$,
and therefore $\mathbb L = I_m-\mathbb A_m.$ Consequently,
$I_{2m}-(\mathbb A_m\otimes I_2) = \mathbb L\otimes I_2.$ Under Remark~\ref{rem:reachability},
$\mathbb L$ is positive stable and therefore nonsingular.
Since $\Psi$ is invertible whenever $v_{x,i}\neq0$,
it follows that $\mathbf M$ is invertible along admissible trajectories.

Under the feedback law \eqref{eq:Robots:control}, the nonlinear network dynamics reduce exactly to the linear closed-loop system \eqref{eq:Robots:linear_cl}. The following result establishes nominal exponential tracking.

\begin{lemma}\label{prop:nominal_tracking}
Consider the control law \eqref{eq:Robots:control} with $\bar u^+=\mathbf K\bar\eta$,
where $\mathbf A+\mathbf B\mathbf K$ is Hurwitz. Then the origin $\bar\eta=0$
of \eqref{eq:Robots:linear_cl} is exponentially stable, i.e., there exist $c_1,c_2>0$ such that
\begin{equation}\label{eq:prop_exp_bound}
    \|\bar\eta(t)\|
    \le
    c_1e^{-c_2 t}\,\|\bar\eta(0)\|,
    \qquad \forall t\ge 0.
\end{equation}
Moreover, $\bar\epsilon(t)\to 0$ exponentially as $t\to\infty$. If, in addition, $\mathbb{L}=\mathbb{L}_m+\mathbb{A}_0$ is invertible and satisfies the balance condition \eqref{eq:ComNet:balance}, then
$\lim_{t\to\infty}\|\bar y(t)-\bar y_0(t)\|=0$.
\end{lemma}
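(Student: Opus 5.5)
The plan is to argue in three stages: exponential stability of the linear closed loop, exponential decay of the output error as a projection of $\bar\eta$, and then transfer from the local errors $\bar\epsilon$ to the global error $\bar y-\bar y_0$ through the invertibility of $\mathbb L$ and the balance condition.

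\emph{Stage 1 (the bound \eqref{eq:prop_exp_bound}).} Substituting $\bar u^+=\mathbf K\bar\eta$ into \eqref{eq:Robots:linear_cl} gives the autonomous linear time-invariant system $\dot{\bar\eta}=(\mathbf A+\mathbf B\mathbf K)\bar\eta$, with solution $\bar\eta(t)=e^{(\mathbf A+\mathbf B\mathbf K)t}\bar\eta(0)$. Since $\mathbf A+\mathbf B\mathbf K$ is Hurwitz, I will either invoke the standard bound $\|e^{\mathbf Ht}\|\le c_1e^{-c_2t}$ for a Hurwitz $\mathbf H$, with any $0<c_2<-\max\operatorial{Re}\lambda(\mathbf H)$, or argue via a Lyapunov function. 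For the latter, solve $(\mathbf A+\mathbf B\mathbf K)^\top P+P(\mathbf A+\mathbf B\mathbf K)=-I$ with $P\succ0$ and take $V=\bar\eta^\top P\bar\eta$. Then $\dot V\le-V/\lambda_{\max}(P)$, which yields \eqref{eq:prop_exp_bound} with
\[
c_1=\sqrt{\lambda_{\max}(P)/\lambda_{\min}(P)},\qquad c_2=1/(2\lambda_{\max}(P)).
\]
With the block choice $\mathbf K=I_m\otimes\mathcal K$, one can also work per agent using $I_m\otimes(\mathcal A+\mathcal B\mathcal K)$.

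\emph{Stage 2 (decay of $\bar\epsilon$).} Each $\epsilon_i$ consists of the first two components of $\eta_i$. Hence $\bar\epsilon=(I_m\otimes[I_2\;\,0_{2\times4}])\bar\eta$, so $\|\bar\epsilon(t)\|\le\|\bar\eta(t)\|\le c_1e^{-c_2t}\|\bar\eta(0)\|$.

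\emph{Stage 3 (consensus to the navigator).} From \eqref{eq:Robots:error} and the balance condition \eqref{eq:ComNet:balance}, i.e. $\mathbb L\mathbf 1_m=\mathbb A_0\mathbf 1_m$, I obtain
\[
(\mathbb A_0\otimes I_2)\bar y_0=(\mathbb A_0\mathbf 1_m\otimes y_0)=(\mathbb L\mathbf 1_m\otimes y_0)=(\mathbb L\otimes I_2)\bar y_0 .
\]
Therefore $\bar\epsilon=(\mathbb L\otimes I_2)(\bar y-\bar y_0)$. Since $\mathbb L$ is invertible (as guaranteed, e.g., by Remark~\ref{rem:reachability}), so is $\mathbb L\otimes I_2$, and
\[
\|\bar y-\bar y_0\|\le\|(\mathbb L^{-1}\otimes I_2)\|\,\|\bar\epsilon\|\to0
\]
exponentially.

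\emph{Main obstacle.} The linear algebra above is routine. The real difficulty is that the reduction to \eqref{eq:Robots:linear_cl} is only valid while the feedback is well defined, that is, while $v_{x,i}(t)\neq0$ so that $\Psi$ and $\mathbf M$ are invertible. It also requires that solutions exist for all $t\ge0$, with no finite escape of the internal states $\theta_i,v_{x,i},\omega_i,F_i$. I plan to state this explicitly as a standing admissibility hypothesis, consistent with the paper's phrase ``along admissible trajectories''. Under this hypothesis, $\bar\eta$ obeys the linear ODE exactly on the maximal interval, and the bound \eqref{eq:prop_exp_bound} holds there; I will not attempt to prove that $v_{x,i}$ stays away from zero.
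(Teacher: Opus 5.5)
Your proposal is correct and follows the paper's proof. The three steps match: the Hurwitz closed loop gives the exponential bound, $\bar\epsilon$ is a coordinate projection of $\bar\eta$, and invertibility of $\mathbb L$ together with the balance identity $\mathbb L\mathbf 1_m=\mathbb A_0\mathbf 1_m$ transfers the decay to $\bar y-\bar y_0$. The paper only orders Stage 3 differently, first inverting to obtain $\bar y^\star=(\mathbb L^{-1}\mathbb A_0\otimes I_2)\bar y_0$ and then using balance to show $\bar y^\star=\bar y_0$. Your explicit Lyapunov constants and your admissibility hypothesis ($v_{x,i}\neq0$, existence of solutions for all $t\ge0$) are sound additions that the paper leaves implicit.
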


\begin{proof}
Under
$\bar u^+=\mathbf K\bar\eta$, \eqref{eq:Robots:linear_cl} becomes $\dot{\bar\eta} = (\mathbf A+\mathbf B\mathbf K)\bar\eta.$
Since
$\mathbf A+\mathbf B\mathbf K$
is Hurwitz, standard linear systems theory implies exponential stability of the origin, yielding \eqref{eq:prop_exp_bound}. Because $\bar\epsilon$ is a subvector of $\bar\eta$, it follows immediately that
$\bar\epsilon(t)\to0$
exponentially.
Next, recall \eqref{eq:Robots:error}.
Since $\mathbb{L}$ is invertible by Remark~\ref{rem:reachability}, define $\bar y^\star \coloneqq (\mathbb{L}^{-1}\mathbb{A}_0\otimes I_2)\bar y_0$,
so that $(\mathbb{L}\otimes I_2)(\bar y-\bar y^\star)=\bar\epsilon$.
Hence,
\begin{equation*}
    \|\bar y(t)-\bar y^\star(t)\|
    \le
    \|\mathbb{L}^{-1}\otimes I_2\|\,\|\bar\epsilon(t)\|
    \to 0,
    \qquad t\to\infty.
\end{equation*}
It remains to show $\bar y^\star(t)=\bar y_0(t)$. Using \eqref{eq:ComNet:balance}, $\mathbb{L}\mathbf{1}_m=\mathbb{A}_0\mathbf{1}_m$.
Left-multiplying by $\mathbb{L}^{-1}$ yields $\mathbf{1}_m=\mathbb{L}^{-1}\mathbb{A}_0\mathbf{1}_m$.
Therefore,
\begin{equation*}
    \bar y^\star =
    (\mathbb{L}^{-1}\mathbb{A}_0\otimes I_2)(\mathbf{1}_m\otimes y_0) =
    (\mathbf{1}_m\otimes y_0) =
    \bar y_0.
\end{equation*}
Combining the above implies $\|\bar y(t)-\bar y_0(t)\|\to 0$.
\end{proof}

\section{Distributed Control Under Adversarial Interactions}\label{sec:AdversarialControl}

This section addresses the effect of adversarial interactions on the distributed nonlinear control developed in Section~4. Building on the exact linear error dynamics obtained via global input--output feedback linearization, we design a resilient desired signal at the network level and establish robustness guarantees under bounded adversarial perturbations.

\subsection{Resilient $z_i$ via Trusted Redundancy}\label{subsec:ResilientRef}

We propose a resilient construction of the local desired signal that mitigates adversarial neighbor information by combining graph redundancy with trusted navigator signals.

\begin{assumption}\label{ass:vartheta}
    For each vehicle $i$, there exists an integer $\vartheta_i\ge0$ such that
    $|\mathcal N_i^a|\le \vartheta_i$ and the in-neighbor set satisfies
    $|\mathcal N_i|\ge 2\vartheta_i+1$.
    Moreover, vehicle $i$ has access to $\vartheta_i$ trusted neighbors
    and the navigator is never corrupted, i.e., $0\notin\mathcal N_i^a$.
\end{assumption}

Let $y_i\in\mathbb R^2$ denote the planar position output of vehicle $i$.
Vehicle $i$ receives neighbor outputs $\{y_j\}_{j\in\mathcal N_i}$, where signals from agents in $\mathcal N_i^a$ may be corrupted, and recall that
$\mathcal N_i = \mathcal N_i^- \cup \mathcal N_i^a$.
Under Assumption~\ref{ass:vartheta},
\begin{equation*}
    |\mathcal N_i^-|
    =
    |\mathcal N_i|-|\mathcal N_i^a|
    \ge
    (2\vartheta_i+1)-\vartheta_i
    =
    \vartheta_i+1.
\end{equation*}
Moreover, there exists a known trusted subset
$\mathcal N_i^{\mathrm{tr}}\subseteq \mathcal N_i^-$
with
$|\mathcal N_i^{\mathrm{tr}}|=\vartheta_i$.
By assumption, if $0\in\mathcal N_i$, then
$0\in\mathcal N_i^{\mathrm{tr}}$.

For each neighbor
$k\in\mathcal N_i\setminus\mathcal N_i^{\mathrm{tr}}$,
define the deviation score
\begin{equation}\label{eq:score}
    \Delta_{ik}
    \coloneqq
    \frac{1}{|\mathcal N_i^{\mathrm{tr}}|}
    \sum\nolimits_{\ell\in\mathcal N_i^{\mathrm{tr}}}
    \|y_\ell-y_k\|.
\end{equation}
Define $\mathcal R_i$ as any index set satisfying
$|\mathcal R_i|=\vartheta_i$
and
$\Delta_{ik}\ge\Delta_{i\ell}$,
for all
$k\in\mathcal R_i$
and
$\ell\notin\mathcal R_i$.
The resilient neighbor set is then defined as
\begin{equation}\label{eq:res_set}
    \mathcal N_i^{\mathrm{res}}
    \coloneqq
    \mathcal N_i\setminus\mathcal R_i.
\end{equation}
Since $|\mathcal N_i^{\mathrm{res}}| = |\mathcal N_i| - \vartheta_i \ge \vartheta_i+1$, the set $\mathcal N_i^{\mathrm{res}}$ is nonempty.

The trimming rule \eqref{eq:res_set} removes up to $\vartheta_i$ neighbor signals with the largest deviations from the trusted set. Consequently, the resilient desired signal is constructed using only the retained neighbor information:
\begin{equation}\label{eq:z_res}
    z_i^{\mathrm{res}}
    \coloneqq
    \sum\nolimits_{j\in\mathcal N_i^{\mathrm{res}}}
    \tilde w_{ij}y_j,
    \qquad
    \tilde w_{ij}>0,
\end{equation}
where the weights satisfy $\sum\nolimits_{j\in\mathcal N_i^{\mathrm{res}}} \tilde w_{ij} = 1.$
The corresponding resilient tracking error is defined by
\begin{equation}\label{eq:error_res}
    \epsilon_i^{\mathrm{res}}
    \coloneqq
    y_i-z_i^{\mathrm{res}}.
\end{equation}

In adversarial scenarios, $\bar z$ is replaced by $\bar z^{\mathrm{res}}$ in the feedback-linearized error dynamics, yielding the same structure as \eqref{eq:Robots:error_dyn} with $\dddot{\bar z}$ replaced by $\dddot{\bar z}^{\mathrm{res}}$.
If Assumption~\ref{ass:vartheta} is violated, we revert to the worst-case robustness analysis.

\subsection{Robustness without Graph Redundancy}\label{subsec:WorstCase}

If Assumption~\ref{ass:vartheta} does not hold for some vehicles, resilient desired-signal formation is not guaranteed. In this case, adversarial interactions are treated as bounded disturbances entering the feedback-linearized error dynamics. As a result, after applying the nonlinear cancellation law \eqref{eq:Robots:control}, the closed-loop error dynamics become
\begin{equation}\label{eq:WorstCase:eta_disturbed}
    \dot{\bar\eta}
    =
    (\mathbf A+\mathbf B\mathbf K)\bar\eta
    +
    \mathbf B\bar \rho,
\end{equation}
where $\bar\rho\in\mathbb R^{2m}$ collects residual terms caused by the corrupted desired-signal derivatives. Under the bounded-derivative attack assumption, there exists $\rho_d>0$ such that
$\|\bar\rho(t)\|\le \rho_d\bar a$.

\begin{lemma}\label{prop:WorstCaseUB}
Assume $\|a_{ij}^{(k)}(t)\|\le \bar a$ for $k=0,\dots,3$ and all $(i,j)\in\mathcal E_a$, and let $\mathbf A+\mathbf B\mathbf K$ be Hurwitz.
Then, for the disturbed system \eqref{eq:WorstCase:eta_disturbed}, the output tracking error is ultimately bounded. In particular, there exists a function $\delta(\cdot)$ such that
\begin{equation}\label{eq:WorstCase:ultimate_bound}
    \limsup_{t\to\infty}\|y_i(t)-y_0(t)\|
    \le
    \delta(\bar a),
    \qquad \forall i\in\{1,\dots,m\}.
\end{equation}
\end{lemma}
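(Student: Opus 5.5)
The plan is to treat \eqref{eq:WorstCase:eta_disturbed} as a Hurwitz linear system driven by a bounded input, obtain an ultimate bound on $\bar\eta$, and then transfer it to $\bar y-\bar y_0$ through the Laplacian relation used in Lemma~\ref{prop:nominal_tracking}.

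\textbf{Step 1: ultimate bound on $\bar\eta$.} Let $\mathbf A_K\coloneqq\mathbf A+\mathbf B\mathbf K$. Since $\mathbf A_K$ is Hurwitz, there exist $c_1,c_2>0$ with $\|e^{\mathbf A_K t}\|\le c_1e^{-c_2t}$, exactly as in \eqref{eq:prop_exp_bound}. The variation-of-constants formula gives
\begin{equation*}
\bar\eta(t)=e^{\mathbf A_K t}\bar\eta(0)+\int_0^t e^{\mathbf A_K(t-s)}\mathbf B\bar\rho(s)\,ds .
\end{equation*}
Using $\|\bar\rho\|\le\rho_d\bar a$, this yields
\begin{equation*}
\|\bar\eta(t)\|\le c_1e^{-c_2t}\|\bar\eta(0)\|+\frac{c_1\|\mathbf B\|\rho_d}{c_2}\,\bar a .
\end{equation*}
Hence $\limsup_{t\to\infty}\|\bar\epsilon(t)\|\le\limsup_{t\to\infty}\|\bar\eta(t)\|\le \gamma\bar a$, where $\gamma\coloneqq c_1\|\mathbf B\|\rho_d/c_2$.

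One could equivalently use a Lyapunov function $V=\bar\eta^\top P\bar\eta$ with $\mathbf A_K^\top P+P\mathbf A_K=-I$. That route gives $\gamma=2\|P\mathbf B\|\rho_d\sqrt{\lambda_{\max}(P)/\lambda_{\min}(P)}$ and the same conclusion.

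\textbf{Step 2: transfer to output deviation.} Here $\bar\epsilon$ is the tracking error actually measured, built with the corrupted desired signal $z^a$. Stack $\bar z^a=(\mathbb A_m\otimes I_2)\bar y+(\mathbb A_0\otimes I_2)\bar y_0+\bar d$, where $\bar d$ gathers the terms $w_{ij}a_{ij}$ from \eqref{eq:Problem:Adv_signal}. Since $w_{ij}\le 1$ after normalization, $\|\bar d\|\le c_d\bar a$, with $c_d$ depending only on the numbers of adversarial edges. The error then satisfies
\begin{equation*}
\bar\epsilon=(\mathbb L\otimes I_2)\bar y-(\mathbb A_0\otimes I_2)\bar y_0-\bar d .
\end{equation*}
Invoke Remark~\ref{rem:reachability}, as in Lemma~\ref{prop:nominal_tracking}, for invertibility of $\mathbb L$, and the balance identity $\mathbb L^{-1}\mathbb A_0\mathbf 1_m=\mathbf 1_m$. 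These give
\begin{equation*}
\bar y-\bar y_0=(\mathbb L^{-1}\otimes I_2)(\bar\epsilon+\bar d).
\end{equation*}
Therefore
\begin{equation*}
\limsup_{t\to\infty}\|y_i-y_0\|\le\limsup_{t\to\infty}\|\bar y-\bar y_0\|\le\|\mathbb L^{-1}\|(\gamma+c_d)\bar a\eqqcolon\delta(\bar a).
\end{equation*}
This $\delta$ is linear, hence class-$\mathcal K$.

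\textbf{Main obstacle.} The delicate point is justifying that the cancellation law remains well defined along the disturbed trajectories. Two things are needed: $v_{x,i}\neq0$, so that $\Psi$ and $\mathbf M$ stay invertible, and no finite escape time, so that the linear representation \eqref{eq:WorstCase:eta_disturbed} holds for all $t\ge0$. I would handle this as the paper does for the nominal case, by restricting to admissible trajectories on which $v_{x,i}$ stays bounded away from zero. On such trajectories \eqref{eq:WorstCase:eta_disturbed} is a globally Lipschitz linear ODE with a bounded input, so solutions exist globally. I would state this as a standing admissibility hypothesis rather than prove it.

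A secondary issue is whether $\bar\rho$ is really bounded by $\rho_d\bar a$. This depends on $\dddot{\bar z}^a-\dddot{\bar z}$ being a weighted sum of the terms $a^{(3)}_{ij}$, and that holds by linearity of \eqref{eq:Problem:Adv_signal}.
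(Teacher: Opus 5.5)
Your proof is correct. It has the same skeleton as the paper's proof: an ultimate bound on $\bar\eta$ for a Hurwitz system with bounded input, projection to $\bar\epsilon$, then inversion of $\mathbb L$ using the balance condition. It differs in two places worth noting.

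\textbf{Step~1.} The paper uses only the Lyapunov route, with $\mathbf A_c^\top\mathbf P+\mathbf P\mathbf A_c=-\mathbf Q$. It reads the bound \eqref{eq:WorstCase:etaUB} directly off the radius $2\|\mathbf P\mathbf B\|\rho_d\bar a/\lambda_{\min}(\mathbf Q)$, outside which $\dot V\le 0$. That step silently drops the factor $\sqrt{\lambda_{\max}(\mathbf P)/\lambda_{\min}(\mathbf P)}$. A decreasing $V$ only traps trajectories in the smallest sublevel set of $V$ containing that ball, not in the ball itself. Your variation-of-constants estimate avoids the issue, and your Lyapunov constant includes the factor, so your explicit $\gamma$ is the sound one. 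The existence of $\delta$ is unaffected either way.

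\textbf{Step~2.} The paper takes $(\mathbb L\otimes I_2)(\bar y-\bar y_0)=\bar\epsilon$ directly from \eqref{eq:Robots:error}. It thus implicitly treats $\bar\eta$ as built from the uncorrupted error, with all corruption lumped into $\bar\rho$. You instead take $\bar\epsilon$ to be the error the controller actually measures, formed with $z_i^a$. That choice produces the offset $\bar d$ and the extra term $\|\mathbb L^{-1}\|c_d\bar a$ in $\delta$. Your bookkeeping is the more cautious one: it stays valid whichever error is fed back, at the price of a slightly larger constant. Under the paper's reading $\bar d$ simply vanishes and you recover its $\delta$ up to constants.

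Both versions rely on the same unproved standing hypotheses. One is admissibility ($v_{x,i}\neq 0$ and global existence), which you flag explicitly. The other is reachability of every vehicle from the navigator, needed for invertibility of $\mathbb L$.
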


\begin{proof}
Let $\mathbf A_c \coloneqq \mathbf A+\mathbf B\mathbf K$. Since $\mathbf A_c$ is Hurwitz, for any given matrix $\mathbf{Q}\succ 0$ there exists a unique $\mathbf{P}\succ 0$ solving the Lyapunov equation
$\mathbf A_c^\top \mathbf{P} + \mathbf{P}\mathbf A_c = -\mathbf{Q}$.
Consider the Lyapunov function $V(\bar\eta)\coloneqq \bar\eta^\top \mathbf{P}\bar\eta$. Along trajectories of \eqref{eq:WorstCase:eta_disturbed},
$\dot V = -\bar\eta^\top \mathbf{Q}\bar\eta + 2\bar\eta^\top \mathbf{P}\mathbf B\, \bar \rho$.
Using Cauchy--Schwarz and $\|\bar \rho(t)\|\le \rho_d\bar a$ gives
\begin{equation*}
    2\bar\eta^\top \mathbf{P}\mathbf B\, \bar\rho(t)
    \le
    2\|\mathbf{P}\mathbf B\|\,\|\bar\eta\|\,\|\bar\rho(t)\|
    \le
    2\|\mathbf{P}\mathbf B\|\,\rho_d\bar a\,\|\bar\eta\|.
\end{equation*}
Moreover, $\bar\eta^\top \mathbf{Q}\bar\eta \ge \lambda_{\min}(\mathbf{Q})\|\bar\eta\|^2$. Hence,
\begin{equation}\label{eq:WorstCase:VdotBound}
    \dot V \le
    -\lambda_{\min}(\mathbf{Q})\|\bar\eta\|^2 +
    2\|\mathbf{P}\mathbf B\|\,\rho_d\bar a\,\|\bar\eta\|.
\end{equation}
It follows that whenever $\|\bar\eta\| \ge 2\|\mathbf{P}\mathbf B\|\,\rho_d\bar a/\lambda_{\min}(\mathbf{Q})$,
we have $\dot V\le 0$, which implies ultimate boundedness of $\bar\eta(t)$ with an ultimate bound proportional to $\bar a$. In particular,
\begin{equation}\label{eq:WorstCase:etaUB}
    \limsup_{t\to\infty}\|\bar\eta(t)\|
    \le
    \frac{2\|\mathbf{P}\mathbf B\|}{\lambda_{\min}(\mathbf{Q})}\,\rho_d\bar a.
\end{equation}
Finally, note that $\bar\epsilon$ is a linear projection of $\bar\eta$. Therefore,
there exists a constant selection matrix $S\in\mathbb R^{2m\times 6m}$ such that
$\bar\epsilon=S\bar\eta$ and
$\|\bar\epsilon(t)\|\le \|S\|\,\|\bar\eta(t)\|$.
Moreover, from \eqref{eq:Robots:error} and \eqref{eq:ComNet:balance},
we have $(\mathbb L\otimes I_2)(\bar y-\bar y_0) = \bar\epsilon$.
Since $\mathbb L$ is nonsingular, $\|\mathbb L^{-1}\|$ is finite and there exists a constant $c_\epsilon>0$ such that
\begin{equation}
    \|\bar y(t)-\bar y_0(t)\|
    \le
    \|\mathbb L^{-1}\otimes I_2\|\,\|\bar\epsilon(t)\|
    \le
    c_\epsilon\|S\|\,\|\bar\eta(t)\|.
\end{equation}
Combining this with \eqref{eq:WorstCase:etaUB} yields
\eqref{eq:WorstCase:ultimate_bound} with
\begin{equation*}
    \delta(\bar a)
    \coloneqq
    c_\epsilon\|S\|\,\frac{2\|\mathbf P\mathbf B\|}{\lambda_{\min}(\mathbf Q)}\rho_d\bar a .
\end{equation*}
establishing ultimate boundedness of the tracking error.
\end{proof}

\section{Numerical Simulations}\label{sec:Sims}

We simulate the networked nonholonomic vehicles governed by \eqref{eq:Robots:nonlinear_dynamics}. The controller follows \eqref{eq:Robots:error_dyn}--\eqref{eq:Robots:linear_cl}, where $\bar u$ is implemented via \eqref{eq:Robots:control_actual}. We consider $m=12$ tracking vehicles and three directed topologies: the star $\mathcal{G}_{\mathrm{s}}$, cyclic $\mathcal{G}_{\mathrm{c}}$, and path $\mathcal{G}_{\mathrm{p}}$ graphs shown in Fig.~\ref{Fig:topology}. The simulations are performed over the interval $t\in[0,30]$ using MATLAB \texttt{ode45} with relative and absolute tolerances $10^{-7}$ and $10^{-9}$, respectively. The feedback gain is chosen as $\mathcal{K} = -\diag\{(1,3,3),(1,3,3)\}$ with $\mathbf K = I_m\otimes \mathcal K$. Initial conditions satisfy $(p_{x,i}(0),p_{y,i}(0))\in[-200,200]^2$, $\theta_i(0)\in[-\pi,\pi]$, $v_{x,i}(0)\in[0.2,0.4]$, and $\omega_i(0)=F_i(0)=0$. The navigator trajectory is generated analytically using smooth sinusoidal reference signals. 
For evaluation, we report the average leader-tracking error
$\tilde e(t)\coloneqq \frac{1}{m}\sum_{i=1}^m \|y_i-y_0\|$
and the average disagreement error
$\tilde \epsilon(t)\coloneqq \frac{1}{m}\sum_{i=1}^m \|y_i-z_i\|$.

\textbf{Without Adversarial Interactions}: Figures~\ref{fig:star_1}--\ref{fig:path_1} show that for all three topologies the tracking vehicles converge to the navigator trajectory, i.e., $\bar e(t)\to 0$ (Fig.~\ref{fig:Err_1}, top).
This topology-independence is consistent with the feedback-linearized error dynamics \eqref{eq:Robots:linear_cl} and the choice of $\mathbf{K}$ such that $\mathbf{A}+\mathbf{B}\mathbf{K}$ is Hurwitz.
Figure~\ref{fig:Err_1} also shows that $\tilde e(t)$ and $\tilde \epsilon(t)$ need not coincide.
Indeed, by definition the term $z_i=\sum_{j\in\mathcal{N}_i}w_{ij}y_j$ in \eqref{eq:Problem:loc_ref}, so $z_i$ is a \emph{topology-dependent} convex combination of neighbor outputs whereas $y_0$ is the same exogenous reference for all vehicles. Hence,
\begin{equation*}
    \|y_i-y_0\| \le
    \|y_i-z_i\|+\|z_i-y_0\|,
\end{equation*}
and the term $\|z_i-y_0\|$ depends on the local mixing induced by $(\mathbb{A}_m,\mathbb{A}_0)$ even when all vehicles ultimately satisfy $y_i\to y_0$.
This mixing effect causes $\tilde \epsilon(t)$ to differ across topologies, while $\tilde e(t)$ remains nearly identical (Fig.~\ref{fig:Err_1}).
\begin{figure}[t!]
    \centering
    \scalebox{1}{{\begin{tikzpicture}
            \centering
            \Text[x=0,y=1.5]{$\mathcal{G}_{\mathrm{s}}\coloneqq$ Star};
            \Vertex[x=0,y=0,label=$0$,color=blue,opacity=0.2,size=.5]{L}
            \Vertex[x=1,y=1,label=$1$,color=red,opacity=0.2,size=.5]{1}
            \Vertex[x=-1,y=1,label=$2$,color=red,opacity=0.2,size=.5]{2}
            \Vertex[x=-1,y=-1,label=$3$,color=red,opacity=0.2,size=.5]{3}
            \Vertex[x=1,y=-1,label=$4$,color=red,opacity=0.2,size=.5]{4}
            \Edge[Direct,color=blue,label=$1.0$](L)(1)
            \Edge[Direct,color=blue,label=$1.0$](L)(2)
            \Edge[Direct,color=blue,label=$1.0$](L)(3)
            \Edge[Direct,color=blue,label=$1.0$](L)(4)
        \end{tikzpicture}}}
    \qquad
    \scalebox{1}{{\begin{tikzpicture}
            \centering
            \Text[x=0,y=1.5]{$\mathcal{G}_{\mathrm{c}}\coloneqq$ Cyclic};
            \Vertex[x=0,y=0,label=$0$,color=blue,opacity=0.2,size=.5]{L}
            \Vertex[x=1,y=1,label=$1$,color=red,opacity=0.2,size=.5]{1}
            \Vertex[x=-1,y=1,label=$2$,color=red,opacity=0.2,size=.5]{2}
            \Vertex[x=-1,y=-1,label=$3$,color=red,opacity=0.2,size=.5]{3}
            \Vertex[x=1,y=-1,label=$4$,color=red,opacity=0.2,size=.5]{4}
            \Edge[Direct,color=blue,label=$0.1$](L)(1)
            \Edge[Direct,color=blue,label=$0.1$](L)(2)
            \Edge[Direct,color=blue,label=$0.1$](L)(3)
            \Edge[Direct,color=blue,label=$0.1$](L)(4)
            \draw[<->,ultra thick, latex' -latex'] (1) -- node[midway, fill=white, inner sep=2pt]{\small $0.45$} (2);
            \draw[<->,ultra thick, latex' -latex'] (2) -- node[midway, fill=white, inner sep=2pt]{\small $0.45$} (3);
            \draw[<->,ultra thick, latex' -latex'] (3) -- node[midway, fill=white, inner sep=2pt]{\small $0.45$} (4);
            \draw[<->,ultra thick, latex' -latex'] (4) -- node[midway, fill=white, inner sep=2pt]{\small $0.45$} (1);
        \end{tikzpicture}}}
    \qquad
    \scalebox{1}{{\begin{tikzpicture}
            \centering
            \Text[x=0,y=1.5]{$\mathcal{G}_{\mathrm{p}}\coloneqq$ Path};
            \Vertex[x=0,y=0,label=$0$,color=blue,opacity=0.2,size=.5]{L}
            \Vertex[x=1,y=1,label=$1$,color=red,opacity=0.2,size=.5]{1}
            \Vertex[x=-1,y=1,label=$2$,color=red,opacity=0.2,size=.5]{2}
            \Vertex[x=-1,y=-1,label=$3$,color=red,opacity=0.2,size=.5]{3}
            \Vertex[x=1,y=-1,label=$4$,color=red,opacity=0.2,size=.5]{4}
            \Edge[Direct,color=blue,label=$1.0$](L)(1)
            \draw[<->,ultra thick, -latex'] (1) -- node[midway, fill=white, inner sep=2pt]{\small $1.0$} (2);
            \draw[<->,ultra thick, -latex'] (2) -- node[midway, fill=white, inner sep=2pt]{\small $1.0$} (3);
            \draw[<->,ultra thick, -latex'] (3) -- node[midway, fill=white, inner sep=2pt]{\small $1.0$} (4);
        \end{tikzpicture}}}
    \caption{Three network topologies $(\mathcal{G}_{\mathrm{s}}, \mathcal{G}_{\mathrm{c}}, \mathcal{G}_{\mathrm{p}})$ with weights used in the simulations.}
    \label{Fig:topology}
\end{figure}
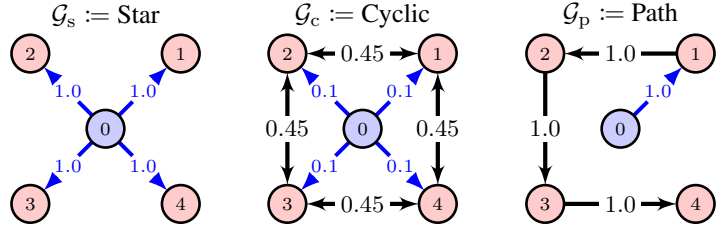

\textbf{With Adversarial Interactions}:
We consider adversarial interactions with the same attack magnitude across the three topologies. In the star topology, adversarial corruption enters through the navigator channel: for a fixed attacked set (e.g., $\{2,5,8,11\}$), the received navigator signal $y_0$ is corrupted. In the cyclic and path topologies, adversarial effects enter through inter-vehicle communication: when $j$ belongs to the attacked set, the transmitted neighbor output $y_j$ is corrupted. In all cases, the resilient desired signal is constructed according to Subsection~\ref{subsec:ResilientRef}; accordingly, $z_i$ is replaced by $z_i^{\mathrm{res}}$ in the controller implementation.

Figures~\ref{fig:star_2}--\ref{fig:path_2} show that the cyclic topology exhibits the strongest robustness under adversarial interactions.
Structurally, each agent $i$ in the cyclic graph has at least $2\vartheta+1$ in-neighbors and direct access to the trusted navigator $y_0$, allowing the trimming rule to discard corrupted neighbor signals while preserving a consistent reference anchored at $y_0$. By contrast, the star topology is vulnerable when the navigator channel is corrupted for some agents, since the trusted-navigator assumption is violated at the measurement level and trimming cannot correct a corrupted $y_0$.
The path topology is also fragile due to its single-chain information flow: once an upstream transmission is corrupted, downstream agents lack sufficient redundancy to reject it.

These effects are summarized in Fig.~\ref{fig:Err_2}.
The cyclic topology achieves the smallest average leader-tracking error (top subplot) and the lowest disagreement to the network reference (bottom subplot), whereas the star and path topologies exhibit larger residual errors due to corrupted navigator reception and limited graph redundancy, respectively.

\begin{figure*}[t!]
    \centering
    \subfloat[\label{fig:star_1} Star--$\mathcal{G}_{\mathrm{s}}$]{\includegraphics[width=.25\linewidth]{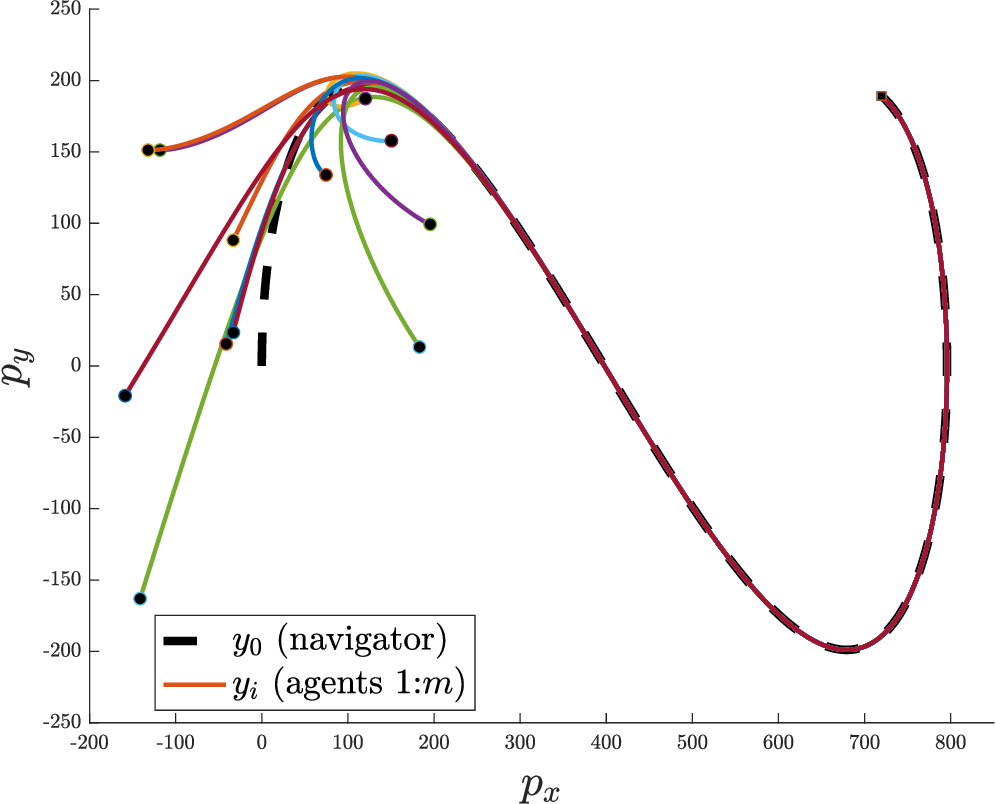}}
    \subfloat[\label{fig:cyclic_1} Cyclic--$\mathcal{G}_{\mathrm{c}}$]{\includegraphics[width=.25\linewidth]{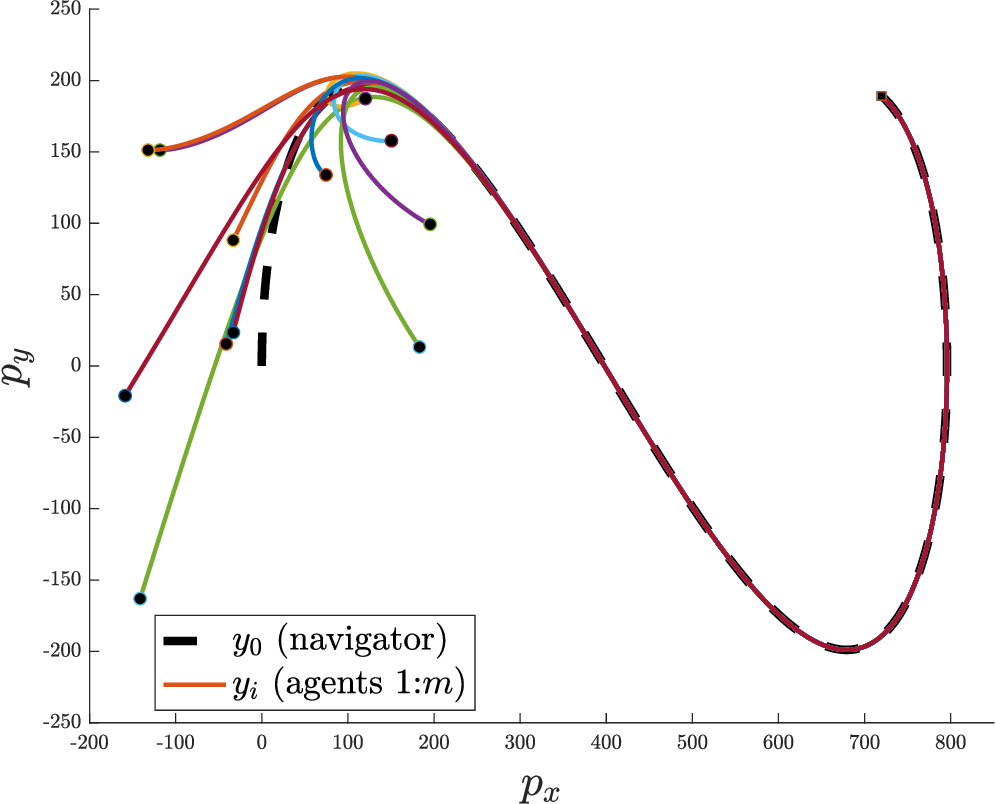}}
    \subfloat[\label{fig:path_1} Path--$\mathcal{G}_{\mathrm{p}}$]{\includegraphics[width=.25\linewidth]{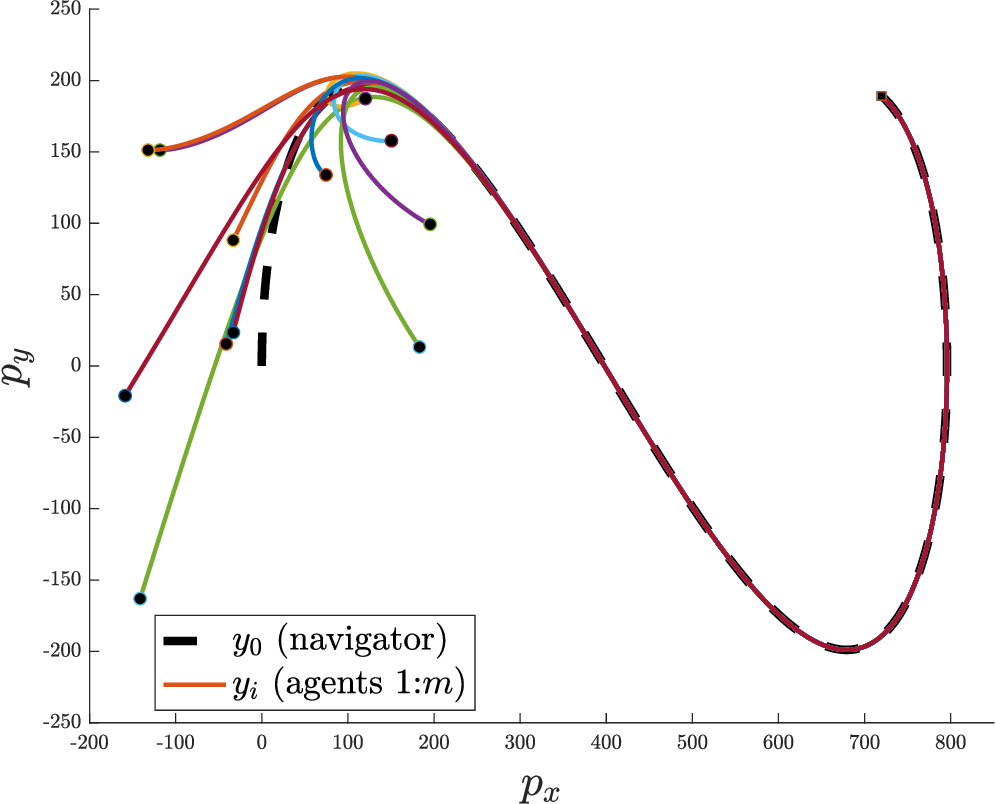}}
    \subfloat[\label{fig:Err_1} Error]{\includegraphics[width=.25\linewidth]{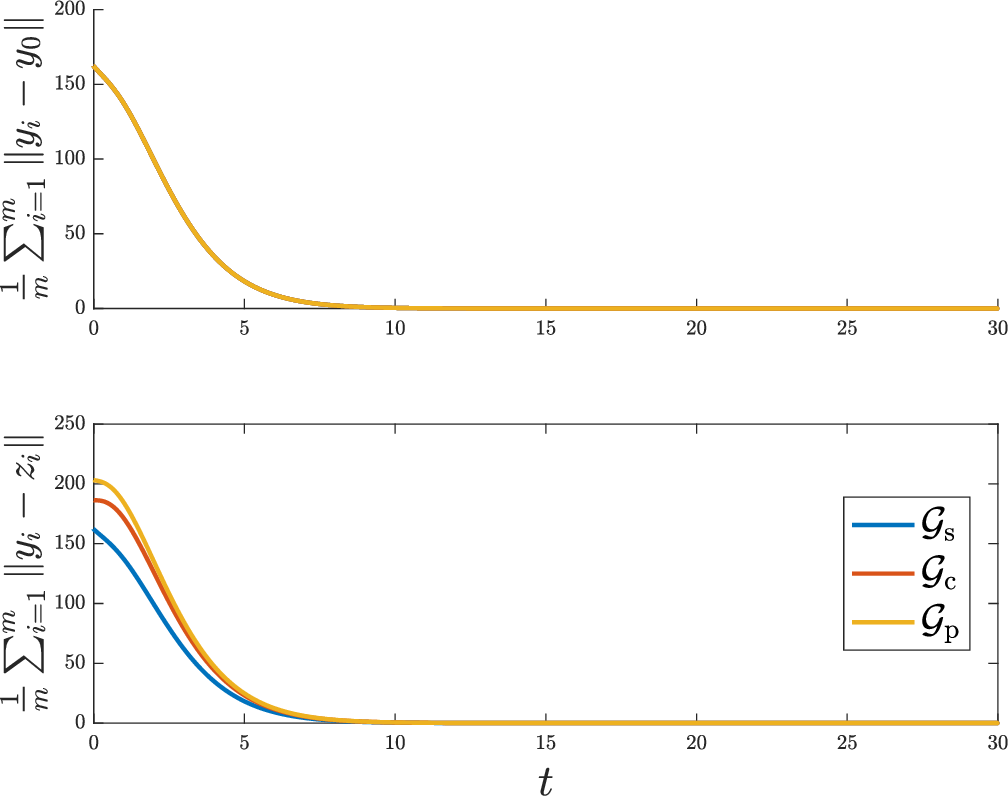}} \\
    \subfloat[\label{fig:star_2} Star--$\mathcal{G}_{\mathrm{s}}$]{\includegraphics[width=.25\linewidth]{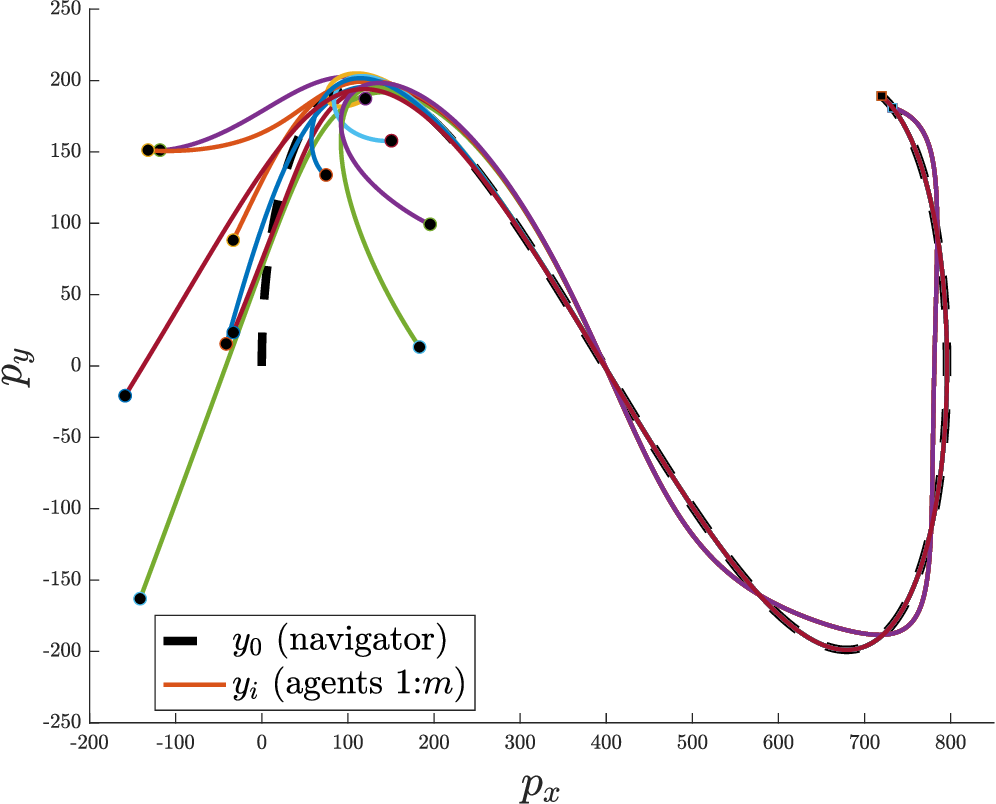}}
    \subfloat[\label{fig:cyclic_2} Cyclic--$\mathcal{G}_{\mathrm{c}}$]{\includegraphics[width=.25\linewidth]{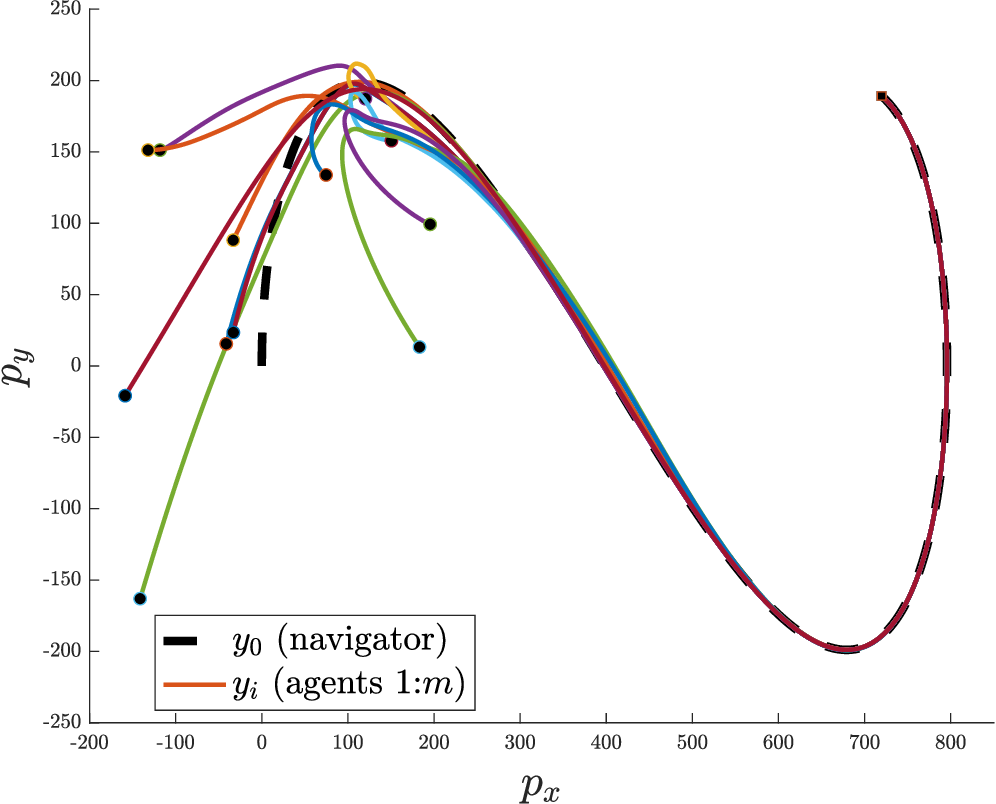}}
    \subfloat[\label{fig:path_2} Path--$\mathcal{G}_{\mathrm{p}}$]{\includegraphics[width=.25\linewidth]{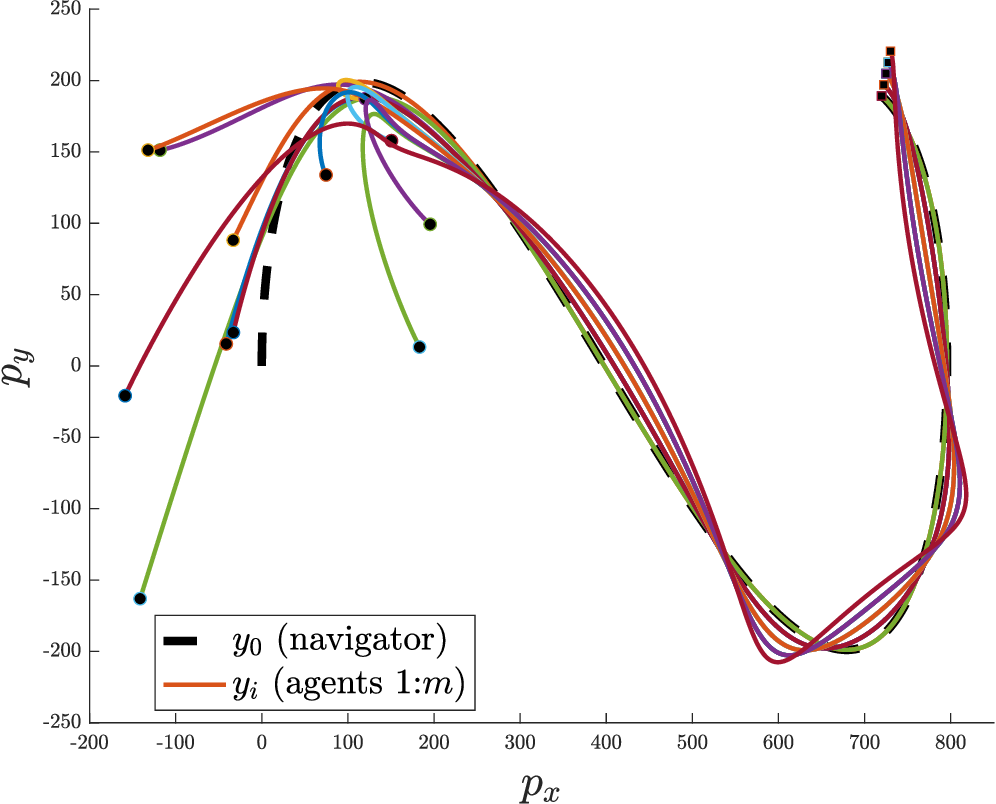}}
    \subfloat[\label{fig:Err_2} Error]{\includegraphics[width=.25\linewidth]{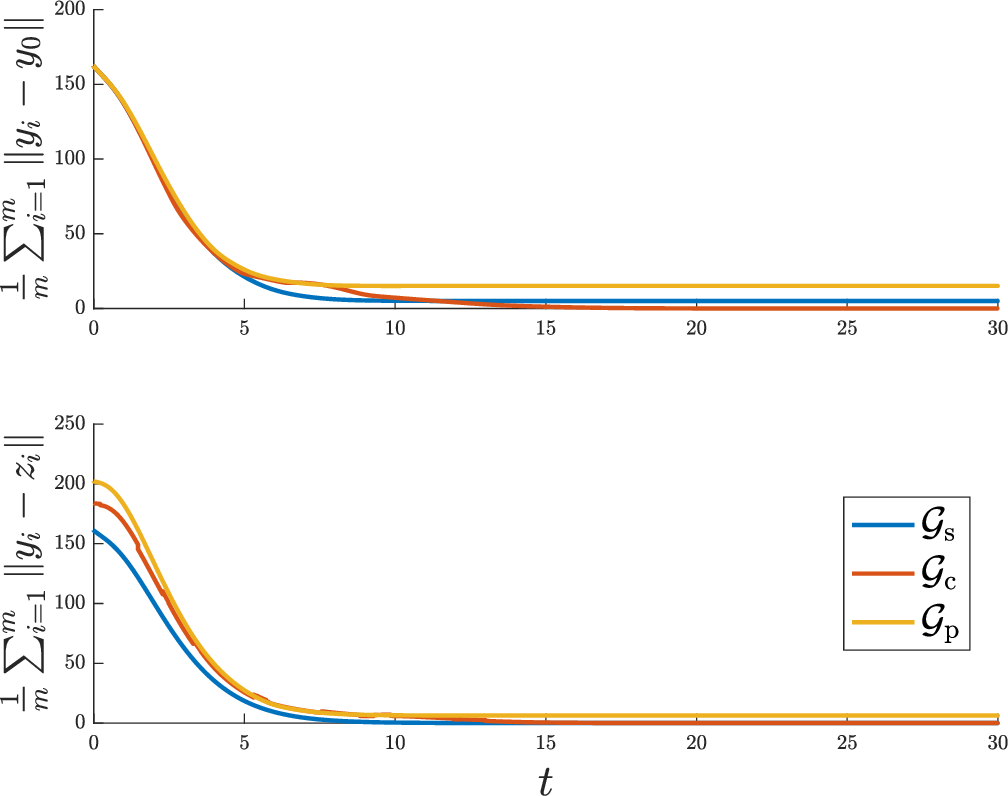}} 
    \caption{Networked trajectory tracking for three directed topologies. Top row (non-adversarial case): vehicle trajectories under star ($\mathcal{G}_{\mathrm{s}}$), cyclic ($\mathcal{G}_{\mathrm{c}}$), and path ($\mathcal{G}_{\mathrm{p}}$) graphs, and the corresponding averaged tracking and disagreement errors. Bottom row (adversarial case): trajectories and errors under the same topologies with corrupted information exchange. The error plots report the average leader-tracking error $\tilde e(t)=\frac{1}{m}\sum_{i=1}^m\|y_i-y_0\|$ and the average disagreement $\tilde\epsilon(t)=\frac{1}{m}\sum_{i=1}^m\|y_i-z_i\|$.}
    \label{fig:topologies}
\end{figure*}

\section{Conclusion}

This paper addressed distributed trajectory tracking of nonholonomic vehicle networks under adversarial information exchange. A global input--output feedback linearization was developed to regulate planar positions with exact linear error dynamics. To mitigate corrupted neighbor information, a resilient desired-signal construction was proposed by exploiting local redundancy together with trusted neighbor signals.
When each vehicle has at least $2\vartheta+1$ in-neighbors and has access to $\vartheta$ trusted signals, the proposed scheme suppresses adversarial effects and recovers nominal tracking performance. If redundancy is violated, the tracking error remains ultimately bounded under a worst-case disturbance characterization. Simulations on star, cyclic, and path topologies demonstrate that cyclic networks achieve superior robustness due to distributed information propagation. Future work will explore adaptive trust mechanisms within feedback-dependent control schemes \cite{Wafi-LCSS24,ref19,Wafi-SIAM,Wafi-MRAC}, develop estimation methods to detect and identify adversarial signals \cite{ref20,Wafi-AIP,Wafi-Elham}, and extend the framework to more general and dynamic attack models beyond the current worst-case disturbance characterization \cite{ref18,ref21}.

\bibliographystyle{ieeetr}
\bibliography{reference.bib}

\end{document}